\DeclarePairedDelimiter\parens{\lparen}{\rparen}
\DeclarePairedDelimiter\angles{\langle}{\rangle}
\DeclarePairedDelimiter\bracks{\lbrack}{\rbrack}
\newcommand{\R}{\mathbb{R}}
\newcommand{\Ker}{\operatorname{Ker}}
\renewcommand{\Im}{\operatorname{Im}}
\newcommand{\calN}{\mathcal{N}}
\newcommand{\half}{\frac{1}{2}}
\newcommand{\UI}{\operatorname{UI}}
\newcommand{\SI}{\operatorname{SI}}
\newcommand{\TMXY}{\text{TMXY}}
\newcommand{\MYXT}{\text{MYXT}}
\newcommand{\indep}{\perp\!\!\!\!\perp} 
\newcommand{\GacsKorner}{Gács-Körner}
\newtheorem{theorem}{Theorem}[subsection]
\newtheorem{lemma}{Lemma}[subsection]
\newtheorem{proposition}{Proposition}[subsection]
\newtheorem{conjecture}{Conjecture}[subsection]
\theoremstyle{definition}
\newtheorem{definition}{Definition}[subsection]
\newtheorem{remark}{Remark}[subsection]
\title{Extracting Unique Information\\Through Markov Relations}
\author{\IEEEauthorblockN{Keerthana Gurushankar\textsuperscript{1}, Praveen Venkatesh\textsuperscript{3} and Pulkit Grover\textsuperscript{1,2}}
\IEEEauthorblockA{
\textsuperscript{1}Dept.~of Electrical and Computer Engineering, 
\textsuperscript{2}Neuroscience Institute, Carnegie Mellon University, Pittsburgh, PA \\
\textsuperscript{3}Allen Institute \& University of Washington, Seattle, WA}
}
\begin{document}

\maketitle
\begin{abstract}
    We propose two new measures for extracting the unique information in $X$ and not $Y$ about a message $M$, when $X, Y$ and $M$ are joint random variables with a given joint distribution. 
    We take a Markov based approach, motivated by questions in fair machine learning, and inspired by similar Markov-based optimization problems that have been used in the Information Bottleneck and Common Information frameworks.
    We obtain a complete characterization of our definitions in the Gaussian case (namely, when $X, Y$ and $M$ are jointly Gaussian), under the assumption of Gaussian optimality.
    We also examine the consistency of our definitions with the partial information decomposition (PID) framework, and show that these Markov based definitions achieve non-negativity, but not symmetry, within the PID framework.
\end{abstract}

\section{Introduction}

Consider the problem of designing machine learning algorithms that can make fair decisions despite biases present in training data.
Concretely, suppose we are given a feature set $X$, from which we need to extract a decision $T$ that captures, as accurately as possible, the true labels of the data $M$, while avoiding dependence on certain \emph{protected attributes} $Y$, such as gender or race.
Producing a fair decision, in this instance, is equivalent to extracting \emph{that} information from the data $X$ about the true labels $M$, which is \emph{independent} of $Y$.
This can be seen as extracting information from $X$ about $M$ that is \emph{uniquely} present in $X$ and not in $Y$. 
Thus, we want to compute a random variable $T$ from $X$ through a Markov Chain $T$---$X$---$(M,Y)$, where $T$ captures the maximum possible information relevant to $M$, $I(T; M)$, while being independent of $Y$, i.e., $T\indep Y$.
An important question that arises in this context is, what is this variable $T$, and how much information about $M$ can it contain?

An alternative, but closely related problem, arises in the context of ``post-processing'' the decision of a machine learning algorithm, e.g., when the algorithm is a black box to the fairness engineer.
Here, given the decision of an algorithm, $M$, the goal is to process it into some variable $T$ to make the dependence on $Y$ (the protected attribute) small, while keeping it as faithful to $X$ (the features) as possible.
One motivation for keeping $T$ faithful to $X$ could be that the original algorithm was trained to generate an $M$ that extracted the information from $X$ that is most relevant to an accurate decision, and now we want to keep that information while excluding information from $Y$.
Thus, we wish to extract those parts $T$ of the true labels $M$ through the Markov chain $T$---$M$---$(X, Y)$, which contain as much information as possible about $X$, while being independent or agnostic to $Y$.
So once again, $T$ has only information about $M$ that is \emph{uniquely} present in $X$ but not in $Y$.

The question of quantifying how much information about $M$ is uniquely present in $X$ and not in $Y$ has been addressed in the recent and growing literature on Partial Information Decomposition (PID).
The PID framework seeks measures that explain the information interactions of two random variables $X$ and $Y$ regarding a third variable $M$. More precisely, it decomposes the mutual information \(I(X,Y; M)\) into four parts: (i) redundant information (\(R\)) about $M$ that can be extracted from either \(X\) or \(Y\), (ii) unique information about $M$ in $X$ ($\UI_X$), which is not in $Y$, (iii) unique information about $M$ in $Y$ ($\UI_Y$) not available in $X$, and (iv) synergistic information $S$ about $M$, which can only be obtained from the combined pair $X$ and $Y$.
First proposed in the work of Williams and Beer~\cite{williams2010nonnegative}, this gives us the fundamental ``sum''-axioms for PID:
\begin{align}
    I(M; X,Y) &= R + \UI_X + \UI_Y + S, \label{eq:pid1} \\
    I(M; X) &= R + \UI_X, \label{eq:pid2} \\
    I(M; Y) &= R + \UI_Y. \label{eq:pid3}
\end{align}
Here,~\eqref{eq:pid2} and~\eqref{eq:pid3} are seen as natural extensions of the intuitive explanations for these quantities: the mutual information that $X$ (alone) has about $M$ should equal the amount that can be uniquely extracted from $X$ plus that which is redundantly present in both $X$ and $Y$.
In addition, one also requires that the four partial information components, $R$, $\UI_X$, $\UI_Y$ and $S$, be non-negative, so that they may be interpreted meaningfully as information quantities.

These sum and non-negativity axioms form the basis of the PID framework; however, they do not uniquely specify a definition for the four partial information components.
Therefore, various definitions of these partial information components have been proposed, with different goals and approaches~\cite{williams2010nonnegative, harder2013bivariate, bertschinger2014quantifying, banerjee2018unique, niu2019measure} (see \cite{lizier2018information} for a review).

While these definitions were obtained using an axiomatic development and to maintain consistency with certain canonical examples, they fall short of our requirements in one important dimension: they do not compute a specific random variable $T$.
Instead, they quantify the \emph{amount} of mutual information that is unique, redundant or synergistic. \footnote{Thus, these definitions could be said to be more \emph{statistical} than \emph{structural}. Harder et al.~\cite{harder2013bivariate} discuss a similar idea very briefly in the context of measuring redundancy, calling it ``mechanistic'' vs.\ ``source'' redundancy rather than statistical vs.\ structural.}

In this paper, we make the following contributions:
\begin{enumerate}
    \item We propose two definitions that are motivated by each of the two fairness problems described at the start of this section. These definitions provide two ways to quantify unique, redundant, and synergistic information while also providing a random variable $T$ (jointly distributed with $M$, $X$ and $Y$) that \emph{represents} the unique information.
    \item We also analyze the case where $M$, $X$ and $Y$ are jointly Gaussian, and obtain closed-form expressions for the unique information in terms of the kernel and image spaces of the covariance matrices involved.
    These results are derived under the assumption that the optimal solution is also Gaussian (formally specified later).
    \item Examining consistency with the PID framework, we find our definition also satisfies the non-negativity axiom, although it does not give rise to a symmetric redundancy (we show this using a counterexample in the Gaussian case).
\end{enumerate}
Our definitions are motivated by Markov-based problem formulations proposed earlier in the literature, such as in the information bottleneck framework~\cite{tishby2000information} and in the definitions of Wyner~\cite{wyner1975common} and \GacsKorner{}~\cite{gacs1973common} common information (the latter two seek meaningful decompositions of \emph{two} random variables into common and unique parts).
In essence, a Markov relationship is assumed between the solution variable (here, $T$) and the variables that specify the problem ($M$, $X$ and $Y$): this means $T$ can be seen as a stochastic function of one or more of $M$, $X$ or $Y$.
Inspired by these frameworks as well as PID-like quantities in applications, we propose a natural Markov based approach to PID.

The two problems we discuss are based on the Markov chains $T$---$X$---$(M, Y)$ and $T$---$M$---$(X, Y)$, and we define only the \emph{unique information} in both settings. Both these problems require us to extract the unique information present in $X$ but not $Y$ about $M$, and produce theoretically equivalent optimization problems, differing only in the interchange of message $M$ and variable $X$.
Although these optimization problems are equivalent for the purpose of analysis, in practice, an interchange of the variables $M$ and $X$ may be highly non-trivial, for e.g. when the message and source variables are of very different dimensionalities.
They also have different properties and meanings, when considering the full PID framework (redundancy and synergy, rather than merely unique information).

One important limitation of our definitions is that the unique information for $X$ and for $Y$ ($\UI_X$ and $\UI_Y$) do not lead to a symmetric notion of redundancy (or synergy).
That is, if we define redundancy as the difference of the mutual information and the unique information (refer Eqs.~\eqref{eq:pid2}, \eqref{eq:pid3}), we find that
\begin{equation*}
    R_X \coloneqq I(M ; X) - \UI_X \neq I(M; Y) - \UI_Y \eqqcolon R_Y,
\end{equation*}
in general.
However, this is a common problem across many PID definitions (e.g., \cite{harder2013bivariate, banerjee2018unique, venkatesh2022partial}), and is commonly dealt with by assigning $R$ to be the minimum of $R_X$ and $R_Y$.
We do not explicitly define the PID this way, however, since it disrupts the interpretation of the variable $T$, which encodes the part of the information about $M$ that is uniquely present in $X$ and not in $Y$.
We leave a more detailed analysis of the possibility of symmetrization and its effects to future work.

The remainder of the paper is organized as follows: in Section II, we propose the two Markov-based measures for extracting unique information.
We also define variants for the Gaussian case, restricting the search space of Markov chains to jointly Gaussian random variables.
In Section III, we fully solve this Gaussian subvariant of the unique information problem to obtain a closed form solution in terms of the covariance matrices of the joint distribution. 
Further, looking beyond unique information, in Section IV, we examine the consistency of these Markov-based definitions with the PID framework, namely their compatibility with the sum, non-negativity and symmetry axioms proposed in Williams and Beer~\cite{williams2010nonnegative} as well as the matching of intuitive expectations in simple logical operator examples.

\section{Notation and Definitions}\label{sec:definitions}
\subsection{Notation}
Upper case letters (e.g. $T,M,X,Y$) are used to denote random variable/vectors. Random vectors are assumed to be column vectorized. $\indep$ is used to denote independence and conditional independence relationships. The notation $X-Y-Z$ denotes a Markov chain between the three random variables (i.e., $X\indep Z|Y$). $\Sigma$ is used to denote covariance and cross-covariance matrices. E.g., for random vectors $M$ and $X$, $\Sigma_{M}$ denotes the covariance $\mathbb{E}\left[MM^T\right]$,  $\Sigma_{MX}$ denotes the cross-covariance $\mathbb{E}\left[MX^T\right]$, and  $\Sigma_{(MX)}$ denotes the covariance of the appended vector $[M^T;X^T]^T$, i.e., $\mathbb{E}\left[[M^T;X^T]^T[M^T;X^T]\right]$ (for zero mean random vectors). $\UI_X = \UI(M:X\backslash Y)$ and $\UI_Y = \UI(M:Y\backslash X)$ are both used interchangeably to refer to unique information, as per the level of detail needed. 

\subsection{Definitions of Unique Information}

\begin{definition}[TMXY-Unique Information] \label{def:tmxy} If message \(M\), and targets \(X, Y\) are random variables jointly distributed as \(p(m, x, y)\), the \(TMXY\)-Unique Information present in \(X\) but not \(Y\) about \(M\) is defined as 
	\[
		\UI(M: X\backslash Y) = \max_{T-M-XY, \ T \indep Y} I(T; X).
	\]
\end{definition}

\begin{definition}[TMXY-Gaussian Unique Information] \label{def:tmxyG} If message \(M\), and targets \(X, Y\) are jointly Gaussian random variables, the \(TMXY\)-Gaussian Unique Information present in \(X\) but not \(Y\) about \(M\) is defined as 
	\[
		\UI^{(\calN)}(M: X\backslash Y) = \max_{\substack{T- M - XY,\  T \indep Y 
		\\ TMXY \text{ jointly Gaussian}}} I(T; X). 
	\]
\end{definition}
\begin{definition}[MYXT-Unique Information] \label{def:myxt} If message \(M\), and sources \(X, Y\) are random variables jointly distributed as \(p(m, x, y)\), the \(MYXT\)-Unique Information present in \(X\) but not \(Y\) about \(M\) is defined as 
	\[
		\hat{\UI}(M: X\backslash Y) = \max_{MY-X-T, \ T \indep Y} I(T; M).
	\]
\end{definition}

\begin{definition}[MYXT-Gaussian Unique Information] \label{def:myxtG}If message \(M\), and sources \(X, Y\) are jointly Gaussian random variables, the \(MYXT\)-Gaussian Unique Information present in \(X\) but not \(Y\) about \(M\) is defined as 
	\[
		\hat\UI^{(\calN)}(M: X\backslash Y) = \max_{\substack{ MY - X - T, \ T \indep Y 
		\\ MXYT \text{ jointly Gaussian}}} I(T; M). 
	\]
\end{definition}

\begin{remark}
Note that Definitions~~\ref{def:tmxyG} and~\ref{def:myxtG} are not merely special cases of Definitions~\ref{def:tmxy} and~\ref{def:myxt}. This is because the optimizations in  Definitions~~\ref{def:tmxyG} and~\ref{def:myxtG} are performed not over all joint distributions satisfying the Markov relationship, but only over distributions that are jointly Gaussian. Whether the optimal solutions for Definitions~\ref{def:tmxy} and~\ref{def:myxt} for Gaussian $M,X,Y$ turn out to be Gaussian remains to be established.
\end{remark}

\section{Results}
\begin{theorem}
	If \(M, X, Y\) are jointly Gaussian random variables, prewhitened such that \(\Sigma_{M}, \Sigma_{X}, \Sigma_{Y} \) are identity matrices of their respective dimensions, then we have 
	\begin{align*}
		\UI^{(\calN)}(&M: X\backslash Y)  \\&=-\half\log\det(I - V_{Y_\perp M}\Sigma_{MX}\Sigma_{XM} V_{M Y_\perp}).
	\end{align*}
	where \(V_{MY_\perp}\) is the orthonormal matrix whose columns span \(\Ker(\Sigma_{YM})\).
\end{theorem}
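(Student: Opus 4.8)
The plan is to solve the maximization in Definition~\ref{def:tmxyG} in closed form. \emph{Reduction to an affine map:} because $TMXY$ is jointly Gaussian and $T-M-XY$, the conditional law of $T$ given $(M,X,Y)$ coincides with that given $M$ and is Gaussian, with mean affine in $M$ and a covariance not depending on $(M,X,Y)$; since mutual information is translation invariant we may therefore assume without loss of generality that $T = AM + N$ with $N$ a Gaussian vector independent of $(M,X,Y)$, and conversely every such $T$ is feasible for the Markov constraint. \emph{Imposing $T\indep Y$:} for jointly Gaussian variables this is equivalent to $\Sigma_{TY}=A\Sigma_{MY}=0$ (the independent noise contributes nothing to this cross-covariance), so every row of $A$ lies in $\Ker(\Sigma_{YM})$ and hence $A = B\,V_{MY_\perp}^{T}$ for some matrix $B$. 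Thus every feasible $T$ has the form $T = B\widetilde{M} + N$, where $\widetilde{M} \coloneqq V_{MY_\perp}^{T} M$ is the projection of $M$ onto $\Ker(\Sigma_{YM})$; using the prewhitening $\Sigma_M = I$ we get $\Sigma_{\widetilde{M}} = V_{MY_\perp}^{T}V_{MY_\perp} = I$ and $\Sigma_{\widetilde{M} Y} = V_{MY_\perp}^{T}\Sigma_{MY}=0$, so $\widetilde{M}$ is standard Gaussian and independent of $Y$.

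Since $N$ is independent of $(\widetilde{M}, X)$ and $\widetilde{M}$ is a function of $M$, each feasible $T = B\widetilde{M} + N$ satisfies the Markov chain $X - \widetilde{M} - T$, so the data processing inequality gives $I(T;X)\le I(\widetilde{M}; X)$; this bound is attained by the feasible choice $T = \widetilde{M}$ (jointly Gaussian with $M,X,Y$, a function of $M$, and independent of $Y$), whence $\UI^{(\calN)}(M:X\backslash Y) = I(\widetilde{M}; X)$. Evaluating the Gaussian mutual information with $\Sigma_{\widetilde{M}}=I$, $\Sigma_X=I$ and $\Sigma_{\widetilde{M} X} = V_{MY_\perp}^{T}\Sigma_{MX}$ via $I(\widetilde{M};X) = \half\log\det\Sigma_{\widetilde{M}} - \half\log\det\bigl(\Sigma_{\widetilde{M}}-\Sigma_{\widetilde{M} X}\Sigma_X^{-1}\Sigma_{X\widetilde{M}}\bigr)$ then yields $-\half\log\det\bigl(I - V_{Y_\perp M}\Sigma_{MX}\Sigma_{XM}V_{MY_\perp}\bigr)$, which is exactly the claimed expression (with $V_{Y_\perp M} = V_{MY_\perp}^{T}$).

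The substantive point — and the one I expect to be the main obstacle to state cleanly — is the middle step: recognizing that the single linear statistic $\widetilde{M} = V_{MY_\perp}^{T}M$ simultaneously (i) is the largest linear image of $M$ that is independent of $Y$ and (ii) is a statistic through which every feasible $T$ must factor, so that the data-processing upper bound is both valid and tight. The reduction to an affine $T$ is the standard structure theorem for Gaussian Markov chains, and the final step is routine Schur-complement bookkeeping; the only care needed there is to track that the resulting matrix acts on the $\dim\Ker(\Sigma_{YM})$-dimensional space and that $I$ in the formula denotes the identity of that dimension.
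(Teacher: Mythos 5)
Your proof is correct, and it reaches the key step by a genuinely different route than the paper. Both arguments agree on the setup: joint Gaussianity plus the Markov chain $T$---$M$---$(X,Y)$ reduce $T$ to the form $T=AM+N$ with $N$ independent Gaussian noise (equivalently, the only free parameter is $\Sigma_{MT}$), and the constraint $T\indep Y$ forces the range condition $\Im(\Sigma_{MT})\subseteq\Ker(\Sigma_{YM})$, parametrized through $V_{MY_\perp}$. Where you diverge is in proving optimality. The paper stays entirely at the level of covariance matrices: it writes $\Sigma_{MT}=V_{MY_\perp}\Sigma$ with $\Sigma^T\Sigma\preccurlyeq I$ and establishes that $\Sigma=I$ maximizes the objective via the determinant inequality $\det(I-\Sigma^T A\Sigma)\geq\det(I-A)$ (using Sylvester's identity and a superadditivity bound $\det(B+C)\geq\det B+\det C$). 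You instead observe that every feasible $T=B\widetilde{M}+N$ factors through the single statistic $\widetilde{M}=V_{MY_\perp}^T M$, so the Markov chain $X$---$\widetilde{M}$---$T$ and the data processing inequality give $I(T;X)\leq I(\widetilde{M};X)$, with equality at the feasible choice $T=\widetilde{M}$. Your argument is arguably cleaner: it sidesteps the matrix-determinant manipulations entirely (including the superadditivity step, whose hypotheses --- both summands positive semidefinite --- are not transparently satisfied by the non-symmetric product $A(I-\Sigma\Sigma^T)$ in the paper's chain of inequalities), and it identifies the optimizer as a sufficient statistic, which makes the structural content of the result explicit. The paper's covariance-level computation, on the other hand, keeps the whole problem inside a single matrix optimization, which is the form most amenable to the relaxations mentioned in the discussion (e.g., bounding $I(T;Y)$ by a nonzero constant rather than requiring exact independence). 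Your final Schur-complement evaluation of $I(\widetilde{M};X)$, including the bookkeeping that the identity in the formula lives on the $\dim\Ker(\Sigma_{YM})$-dimensional space, matches the paper's.
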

\begin{proof}
	The constraints and relevant terms in the problem assuming all variables have been pre-whitened as \(T \mapsto \Sigma_T ^{-\half}T, M \mapsto \Sigma^{-\half} M\) and similarly for \(X\) and \(Y\),  can be written in terms of fixed covariance matrices, and free parameter \(\Sigma_{MT}\) as follows:
	
	First, to check the positive semi-definiteness constraint $\Sigma_{(TMXY)} \succcurlyeq 0$, given $\Sigma_{(MXY)} \succcurlyeq 0$, it suffices to check that the Schur complement $\Sigma_{(TMXY)}/\Sigma_{(MXY)} \succcurlyeq 0$. Substituting the Markov constraint $T-M-XY$ in this matrix (namely, $\Sigma_{T(XY)} = \Sigma_{TM}\Sigma_{M(XY)}$ using \Cref{gaussian-markov}), we find this reduces to $\Sigma_{(TMXY)}/\Sigma_{(MXY)} = \Sigma_{(TM)}/\Sigma_M = I - \Sigma_{TM}\Sigma_{MT} \succcurlyeq 0$. For the remaining constraint $T\indep Y$ and the objective function, we find
	\begin{align*}
		 T \indep Y &\iff \Sigma_{YT} = \Sigma_{YM}\Sigma_{MT} = 0\\
		 I(T; X) &= -\half\log\det(I - \Sigma_{TX}\Sigma_{XT}) \\
		 &=  -\half\log\det(I - \Sigma_{TM} \Sigma_{MX}\Sigma_{XM}\Sigma_{MT})
	\end{align*}
	Thus, in terms of parameter \(\Sigma_{MT}\), \( \UI^{(\calN)}(M: X\backslash Y) \) is
	\begin{align*}
	    &\max -\half\log\det(I - \Sigma_{TM} \Sigma_{MX}\Sigma_{XM}\Sigma_{MT})\\
		\text{over } &\Sigma_{MT} \text{ such that }\Sigma_{YM}\Sigma_{MT} = 0,\ \Sigma_{TM}\Sigma_{MT} \preccurlyeq I
	\end{align*}
	Now, since \(\Sigma_{YM}\Sigma_{MT} = 0\), we must have \(\Im(\Sigma_{MT}) \subseteq \Ker(\Sigma_{YM})\). 
	We use this to parametrize \(\Sigma_{MT}\) as follows: choose an orthonormal basis for \(\Ker(\Sigma_{YM})\) and let \(V_{MY_\perp}\) be the matrix whose columns are this basis. Then we have \(\Sigma_{YM}V_{MY_\perp} = 0\) and \(V_{MY_\perp}^T V_{MY_\perp} = I_p\). Now, \(\Sigma_{YM}\Sigma_{MT} = 0\) iff \(\Sigma_{MT} = V_{MY_\perp} \Sigma\) for some matrix \(\Sigma\in \R^{p\times t}\), and this yields a parametrization \(\Sigma_{MT}(\Sigma)\). 
	\begin{align*}
		\Sigma_{TM}\Sigma_{MT} &= \Sigma^T  V_{Y_\perp M} V_{M Y_\perp } \Sigma = \Sigma^T \Sigma\\
		\Sigma_{TX}\Sigma_{XT} &= \Sigma^T V_{Y_\perp M}\Sigma_{MX} \Sigma_{XM} V_{MY_\perp} \Sigma
	\end{align*}
	Then the optimization problem, in terms of \(\Sigma\) becomes
	\begin{align*}
		&\UI^{(\calN)}(M: X\backslash Y) = \\
		&\max_{\Sigma^T \Sigma \preccurlyeq I} -\half\log\det(I - \Sigma^T  V_{Y_\perp M}\Sigma_{MX} \Sigma_{XM} V_{MY_\perp} \Sigma)
	\end{align*}
	Let $A = V_{Y_\perp M}\Sigma_{MX}\Sigma_{XM} V_{MY_\perp}$, for every $\Sigma$ such that $\Sigma^T\Sigma \preccurlyeq I$, we must have
	\[
	    \det(I-\Sigma A \Sigma^T) \geq \det(I-A)
	\]
	by reasoning using simple determinant identities as follows:
	\begin{align*}
		\det(I - \Sigma^T A \Sigma) &= \det(I- A\Sigma\Sigma^T)\\
		&= \det(I-A + A(I- \Sigma\Sigma^T))\\
		&\geq \det(I-A) + \det(A)\det(I-\Sigma\Sigma^T)\\
		&\geq \det(I-A)
	\end{align*}
	Note that $\Sigma\Sigma^T \preccurlyeq I$ precisely whenever $\Sigma^T \Sigma \preccurlyeq I$, as can be seen by reasoning with Schur complements. Note also that equality is achieved by \(\Sigma:\Sigma \Sigma^T = I\), i.e., by  \(\Sigma_{MT} = V_{MY_\perp}\). Thus the unique information is obtained to be
	\begin{align*}
		&\max_{\substack{T- M - XY,\  T \indep Y \\ TMXY \text{ jointly Gaussian}}} I(T; X) \\
		&= -\half\log\det (I -  V_{Y_\perp M}\Sigma_{MX} \Sigma_{XM} V_{MY_\perp}). 
	\end{align*}
\end{proof}

\begin{theorem}
	If \(M, X, Y\) are jointly Gaussian random variables, prewhitened such that \(\Sigma_{M}, \Sigma_{X}, \Sigma_{Y} \) are identity matrices of their respective dimensions, then we have 
	\begin{align*}
		\hat\UI^{(\calN)}(&M: X\backslash Y)\\
		&= -\half\log\det(I - V_{Y_\perp X}\Sigma_{XM}\Sigma_{MX} V_{X Y_\perp})
	\end{align*}
	where \(V_{XY_\perp}\) is the orthonormal matrix whose columns span \(\Ker(\Sigma_{YX})\).
\end{theorem}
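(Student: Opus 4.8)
The plan is to derive the formula by reducing to the theorem already proved above: the optimization defining $\hat\UI^{(\calN)}(M:X\backslash Y)$ in \Cref{def:myxtG} is obtained from the optimization defining $\UI^{(\calN)}(M:X\backslash Y)$ in \Cref{def:tmxyG} simply by interchanging the roles of the message $M$ and the source $X$. Under the substitution $M\leftrightarrow X$, the Markov chain $T-M-XY$ becomes $T-X-MY$ (equivalently $MY-X-T$), the constraint $T\indep Y$ is unchanged, the joint-Gaussianity requirement is unchanged, the objective $I(T;X)$ becomes $I(T;M)$, and the prewhitening hypothesis is symmetric in $M$ and $X$ (both $\Sigma_{M}$ and $\Sigma_{X}$ are required to be the identity). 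Hence the two problems are the same problem up to relabeling, and the claimed formula follows from the preceding theorem by applying $M\leftrightarrow X$ to its conclusion: $V_{MY_\perp}$, a basis of $\Ker(\Sigma_{YM})$, becomes $V_{XY_\perp}$, a basis of $\Ker(\Sigma_{YX})$, and $\Sigma_{MX}\Sigma_{XM}$ becomes $\Sigma_{XM}\Sigma_{MX}$, yielding $-\half\log\det(I - V_{Y_\perp X}\Sigma_{XM}\Sigma_{MX}V_{XY_\perp})$.

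For completeness I would also spell out the direct derivation, which mirrors the proof of the previous theorem step for step, with the cross-covariance $\Sigma_{XT}$ now playing the role of the free parameter. First, after prewhitening, I would verify $\Sigma_{(TMXY)}\succcurlyeq 0$ by a Schur complement against $\Sigma_{(MXY)}$; substituting the Markov constraint $MY-X-T$ (so that $\Sigma_{T(MY)} = \Sigma_{TX}\Sigma_{X(MY)}$ by the Gaussian Markov identity) collapses this to $I - \Sigma_{TX}\Sigma_{XT}\succcurlyeq 0$. Next I would rewrite the independence constraint as $\Sigma_{YT} = \Sigma_{YX}\Sigma_{XT} = 0$ and the objective as $I(T;M) = -\half\log\det(I - \Sigma_{TX}\Sigma_{XM}\Sigma_{MX}\Sigma_{XT})$. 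Since $\Sigma_{YX}\Sigma_{XT}=0$ forces $\Im(\Sigma_{XT})\subseteq\Ker(\Sigma_{YX})$, I would parametrize $\Sigma_{XT} = V_{XY_\perp}\Sigma$ with $V_{XY_\perp}$ an orthonormal basis of $\Ker(\Sigma_{YX})$, reducing the problem to
\[
\max_{\Sigma^T\Sigma\preccurlyeq I} -\half\log\det(I - \Sigma^T A\Sigma), \qquad A = V_{Y_\perp X}\Sigma_{XM}\Sigma_{MX}V_{XY_\perp}.
\]
The determinant inequality established in the proof of the previous theorem — namely $\det(I-\Sigma^T A\Sigma)\ge\det(I-A)$ whenever $\Sigma^T\Sigma\preccurlyeq I$, with equality when $\Sigma\Sigma^T = I$ — then gives the optimum $-\half\log\det(I-A)$, attained at $\Sigma_{XT} = V_{XY_\perp}$, which is admissible since its columns are orthonormal and hence $V_{XY_\perp}V_{XY_\perp}^T\preccurlyeq I$.

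The main obstacle, such as it is, is not computational but a matter of due diligence: I must check that nothing in the proof of the previous theorem exploited an asymmetry between the ``message'' $M$ and the ``target'' $X$ beyond what the $M\leftrightarrow X$ relabeling already accounts for — in particular that the prewhitening assumption, the positive-semidefiniteness reduction via Schur complements, and the admissibility of the maximizer are all invariant under the swap (they are). Once this is confirmed, the result is immediate. I would present the symmetry argument as the main proof and include the explicit re-derivation as a remark, or omit it.
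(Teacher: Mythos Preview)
Your proposal is correct and matches the paper's approach exactly: the paper's proof simply observes that the optimization for $\hat\UI^{(\calN)}$ differs from that for $\UI^{(\calN)}$ only by interchanging $M$ and $X$, writes down the resulting optimization in the free parameter $\Sigma_{XT}$, and notes that it is solved by $\Sigma_{XT}=V_{XY_\perp}$. Your write-up is in fact more detailed than the paper's, which omits the explicit Schur-complement and parametrization steps you include.
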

\begin{proof}
    The proof follows very similarly to the previous result, as the optimization problems for the two definitions differ only in an interchange of the message random variable $M$ and source random variable $X$. In effect, here we obtain an optimization problem in the free parameter $\Sigma_{XT}$ that would yield $\hat\UI^{(\calN)}(M:X\backslash Y)$ as
    \begin{align*}
	    &\max -\half\log\det(I - \Sigma_{TX} \Sigma_{XM}\Sigma_{MX}\Sigma_{X})\\
		\text{over } &\Sigma_{XT} \text{ such that }\Sigma_{YX}\Sigma_{XT} = 0,\ \Sigma_{TX}\Sigma_{XT} \preccurlyeq I
	\end{align*}
	which is solved by $\Sigma_{XT} = V_{XY_\perp}$.
\end{proof}

\section{Properties}
In this section, we turn to examine the consistency of our definitions with the PID axioms. 

Beginning with our definition of unique information, we define unsymmetrized redundant and synergistic information terms, 
\begin{align*}
    R_X &= I(M; X) - \UI_X, &S_X &= I(M; X | Y) - \UI_X\\
    R_Y &= I(M; Y) - \UI_Y, &S_Y &= I(M; Y|X) - \UI_Y
\end{align*}

\subsection{Non-Negativity}

We study the consistency of these unsymmetrized terms with Williams and Beer's non-negativity axiom, namely whether $\UI_X, \UI_Y, R_X, R_Y, S_X, S_Y$ are all non-negative. 

It is easy to see that the unique information terms, being mutual information quantities, are always non-negative. Below, we show that  both redundancy and synergy terms are also non-negative in every case. 
\begin{proposition}
	\begin{align}
		\UI(M:X\backslash Y) &\leq I(M; X)
	\end{align}
\end{proposition}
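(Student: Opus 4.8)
The plan is to show that the unique information $\UI(M:X\backslash Y) = \max_{T-M-XY,\ T\indep Y} I(T;X)$ is upper-bounded by $I(M;X)$ by exhibiting the bound pointwise over the feasible set of the optimization, i.e., showing $I(T;X) \le I(M;X)$ for every feasible $T$. The key observation is that the Markov chain $T-M-XY$ in particular implies the Markov chain $T-M-X$, so the data processing inequality applies directly.

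First I would fix an arbitrary $T$ satisfying the constraints $T-M-XY$ and $T\indep Y$. From $T-M-XY$ we extract $T-M-X$ (marginalizing out $Y$ preserves the conditional independence $T\indep XY \mid M$, hence $T\indep X\mid M$). Then the data processing inequality gives $I(T;X)\le I(M;X)$. Since this holds for every feasible $T$, taking the maximum over the feasible set yields $\UI(M:X\backslash Y)\le I(M;X)$, which is exactly the claimed inequality. Note the constraint $T\indep Y$ is not even needed for this direction — it only shrinks the feasible set, so the bound is a fortiori true.

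There is essentially no obstacle here; the statement is an immediate consequence of data processing applied along the Markov chain built into the definition. The only thing to be careful about is the direction of the Markov chain: because $T$ is obtained by (stochastic) post-processing of $M$, the relevant chain is $X - M - T$ (equivalently $T - M - X$), so $X$ and $T$ are conditionally independent given $M$, and $I(T;X)$ cannot exceed $I(M;X)$. One could equivalently phrase it as: $T$ is a channel output of $M$, and $M$ already "saw" $X$ only through its own dependence, so no processing of $M$ can reveal more about $X$ than $M$ itself does. This also foreshadows that $R_X = I(M;X)-\UI_X \ge 0$, the first of the redundancy/synergy non-negativity claims.
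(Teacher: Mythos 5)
Your proof is correct and takes essentially the same route as the paper: both reduce the Markov chain $T$---$M$---$XY$ to $T$---$M$---$X$, apply the data processing inequality to get $I(T;X)\le I(M;X)$ for every feasible $T$, and take the maximum. Your added remarks (that $T\indep Y$ is not needed, and the explicit marginalization step) are fine but do not change the argument.
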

\begin{proof}
    First, for the $\TMXY-\UI$, it follows from the data processing inequality that for every \(T-M-X\), we must have \(I(T; X) \leq I(M; X)\), thus taking maximum over such a set of \(T\), we must have \(\UI(M:X\backslash Y) =\max_{T-M-XY,\ T\indep Y} I(T; X) \leq I(M; X)\). 
	
	The proof for the second definition follows likewise from the data processing inequality. 
\end{proof}

\begin{proposition}
	\begin{align}
		\UI(M:X\backslash Y) &\leq I(M;X|Y)
	\end{align}
\end{proposition}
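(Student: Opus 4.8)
The plan is to show $I(T;X)\le I(M;X|Y)$ for \emph{every} $T$ feasible in Definition~\ref{def:tmxy}, and then take the maximum; the $\MYXT$ case will follow by the same template with the roles of $M$ and $X$ suitably interchanged. The argument has two ingredients: using the constraint $T\indep Y$ to pass from $I(T;X)$ to the conditional quantity $I(T;X\mid Y)$, and using the Markov constraint $T-M-XY$ to bound $I(T;X\mid Y)$ by $I(M;X\mid Y)$.

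For the first ingredient, I would write the chain rule for $I(T;X,Y)$ two ways, $I(T;X,Y)=I(T;X)+I(T;Y\mid X)=I(T;Y)+I(T;X\mid Y)$, and use $I(T;Y)=0$ to get $I(T;X)=I(T;X\mid Y)-I(T;Y\mid X)\le I(T;X\mid Y)$. So it remains to prove $I(T;X\mid Y)\le I(M;X\mid Y)$. For the second ingredient, the hypothesis $T\indep(X,Y)\mid M$ implies, by the weak-union property of conditional independence, that $T\indep X\mid(M,Y)$, i.e.\ $I(T;X\mid M,Y)=0$. Expanding $I(T,M;X\mid Y)$ two ways then gives $I(T,M;X\mid Y)=I(M;X\mid Y)+I(T;X\mid M,Y)=I(M;X\mid Y)$ and $I(T,M;X\mid Y)=I(T;X\mid Y)+I(M;X\mid T,Y)\ge I(T;X\mid Y)$, hence $I(T;X\mid Y)\le I(M;X\mid Y)$. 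Chaining the two bounds yields $I(T;X)\le I(M;X\mid Y)$ for all feasible $T$, and taking the supremum proves the proposition.

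For the $\MYXT$ definition, the same two steps apply: $T\indep Y$ gives $I(T;M)\le I(T;M\mid Y)$, and the Markov constraint $MY-X-T$ (i.e.\ $T\indep(M,Y)\mid X$) gives $T\indep M\mid(X,Y)$ by weak union, so expanding $I(T,X;M\mid Y)$ two ways gives $I(T;M\mid Y)\le I(X;M\mid Y)=I(M;X\mid Y)$. I do not anticipate a real obstacle; the one place that warrants care is the application of the weak-union property to convert the unconditional Markov chain into the conditional independence $T\indep X\mid(M,Y)$ (resp.\ $T\indep M\mid(X,Y)$), after which the estimate is just chain-rule bookkeeping together with non-negativity of conditional mutual information.
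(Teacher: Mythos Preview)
Your proposal is correct and follows essentially the same two-step approach as the paper: first use $T\indep Y$ and the chain rule for $I(T;X,Y)$ to get $I(T;X)\le I(T;X\mid Y)$, then expand $I(T,M;X\mid Y)$ two ways and use $I(T;X\mid M,Y)=0$ to bound $I(T;X\mid Y)$ by $I(M;X\mid Y)$. The only cosmetic difference is that you invoke weak union directly to obtain $I(T;X\mid M,Y)=0$, whereas the paper derives the equivalent fact $I(X;T\mid MY)=0$ by a short chain-rule computation showing $I(X;TMY)=I(X;MY)$; your route is a touch more direct but the substance is identical.
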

\begin{proof}
    First, for the $\TMXY-\UI$ definition, we show that for every \(T\) such that \(T-M-XY\) and \(T\indep Y\), we must have \(I(T; X) \leq I(M; X | Y)\). For this first, by the chain rule of mutual information,
	\begin{align*}
		I(T; XY) &= I(T; Y) + I(T; X | Y) = I(T; X | Y)\\
		&= I(T; X) + I(T; Y| X)
	\end{align*}
	Thus, \(I(T; X) \leq I(T; X | Y)\), so it suffices to show \(I(T; X | Y) \leq I(M; X | Y)\). For this, we use
	\begin{align*}
		I(X; TM | Y) &= I(T; X | Y) + I(X; M | TY)\\
		&= I(X; M | Y) + I(X; T | MY)
	\end{align*}
	So it would suffice to show \(I(X; T|MY) = 0\) to obtain the desired inequality. To this end, we show that \(I(X; TMY) = I(X; MY)\) as follows:
	\begin{align*}
		I(X; TMY) &= I(X; M) + I(X; YT | M)\\
		&= I(X; M) + I(X; T | M ) + I(X; Y | MT)\\
		&= I(X; M) + I(X; Y | M)\\
		&= I(X; YM)
	\end{align*}
	And thus, \(I(X; T | MY) = I(X; TMY) - I(X; MY) = 0\). 
	
	Second, for the $\MYXT-\UI$ definition, the proof follows from the interchange of variables $M$ and $X$. 
\end{proof}




\subsection{Counterexample to Symmetry}
Turning to symmetry however, we show that it is not in general the case that the unsymmetrized redundancy terms $R_X$ and $R_Y$ are identical. 

When the cross-covariance matrices whose kernels are involved the Gaussian Unique information expressions are full-rank, they drive the extractable unique information to zero. 

\begin{proof}[Counterexample] Consider, for the Gaussian $\TMXY-\UI$ definition, random variables $M, X, Y$ distributed as
\begin{align*}
    X, Y, M \sim \calN 
    \bracks*{0, 
        \begin{bmatrix}
            1           &0           &\rho_X \\
            0           &1           &\rho_Y\\
            \rho_X     &\rho_Y    &I
        \end{bmatrix}
    }
\end{align*}
for any $\rho_X^2 + \rho_Y^2 < 1$. This yields $\UI_X = \UI_Y = 0$, and in turn $R_X = I(M; X) \neq I(M; Y) = R_Y$. 

Likewise, for the $\MYXT-\UI$ definition, the random variables $M, X, Y$ distributed as
\begin{align*}
    X, Y, M \sim \calN 
    \bracks*{0, 
        \begin{bmatrix}
            1           &\rho           &\epsilon \\
            \rho           &1           &0\\
            \epsilon     &0    &I
        \end{bmatrix}
    }
\end{align*} 
yield terms $R_X = I(M; X) \neq I(M; Y) = R_Y$. 
\end{proof}

Thus, it is not in general true that redundant information is symmetric in either of these Markov-based approaches. 

    It is important to note here, that these counterexamples are for the Gaussian variant, which by restricting the search space of Markov chains to joint Gaussian variables, solves a different optimization problem than the general definition. 
As a result, these joint Gaussian distributions are only counterexamples to the general definitions (\ref{def:tmxy} and \ref{def:myxt}) under the assumption of Gaussian optimality for joint Gaussian distributions.

\subsection{Binary Examples}
For the simple binary operator examples displayed in Table 1, however, these definitions match our intuitive expectations, including symmetry of redundancy and synergy. 

These examples are derived with the help of \Cref{lem:binary}, which provides a strong independence condition for binary random variables. 

\begin{table*}\label{Tab: egtable}
    \linespread{1.1}
    \begin{center}
    \normalsize
    \begin{tabular}{r | c c c c}
        \toprule
        $\UI_X, \UI_Y, R, S$ &\text{RDN} &\text{UNQ} &\text{XOR} &\text{AND} \\
        \midrule
         Williams and Beer \cite{williams2010nonnegative}  &$0, 0, 1, 0$  &$1, 1, 0, 0$    &$0, 0, 0, 1$ &$0, 0, 0.311, 0.500$\\
         Bertschinger \cite{bertschinger2014quantifying}  &$0, 0, 1, 0$  &$1, 1, 0, 0$    &$0, 0, 0, 1$ &$0, 0, 0.311, 0.500$\\
         $TMXY$  &$0, 0, 1, 0$  &$1, 1, 0, 0$    &$0, 0, 0, 1$ &$0, 0, 0.311, 0.500$\\
         $MYXT$  &$0, 0, 1, 0$  &$1, 1, 0, 0$    &$0, 0, 0, 1$ &$0, 0, 0.311, 0.500$\\
        \bottomrule
    \end{tabular}
    \vspace{0.1in}
    \caption{Unique, redundant and synergistic information for four of the examples considered in~\cite{niu2019measure} and ~\cite{bertschinger2014quantifying}. In each case, the quantification of our definitions match with those in~\cite{williams2010nonnegative} and~\cite{bertschinger2014quantifying}.}
    \end{center}
\end{table*}

\section{Discussion and conclusions}

In this work, we provided two measures of unique information through optimization under Markov-relations constraints, motivated by application scenarios in fairness. A key benefit of our optimization, in contrast with commonly used measures of PID quantities, is that we obtain a random variable jointly distributed with $M,X,Y$ that represents unique information. This can be useful in understanding what function of the random variables captures unique information, which can help obtain inferences in AI and neuroscience, two of the areas where PID framework's applications are envisioned~\cite{bertschinger2014quantifying, dutta2020information, pica2017quantifying}.  We provide solutions for these optimization problems in the Gaussian case, and examine these measures using simple examples. The solutions in the Gaussian case are obtained under the assumption that the optimal solution is also Gaussian. The proof of this assumption is a part of our ongoing work. 

Possible relaxations of our definitions include cases where strict independence is not required in optimization (e.g., for unique information of $M$ in $X$, we could bound $I(T;Y)$ by a non-zero constant). Our analysis reveals that the Gaussian version of this problem can also be addressed (assuming again that the optimal distribution is Gaussian). Such relaxation also yields a version where the definition has Blackwell sufficiency, and we are examining whether it may also yield symmetry in redundant information definitions that our current definition lacks (in general). 

We do not yet understand whether our definition satisfies (what we call) the Independent Sums Property:

\begin{definition}[Independent Sums Property]
	A unique information measure, \(\UI\), is said to have the indepedent sums property if when \((M_1, X_1, Y_1)\indep (M_2, X_2, Y_2)\), we have
	\begin{align*}
		&\UI(M_1, M_2 : X_1, X_2 \backslash Y_1, Y_2) \\&= \UI (M_1 : X_1 \backslash Y_1) + \UI (M_2 : X_2 \backslash Y_2 ).
	\end{align*}
\end{definition}
We conjecture that our definitions do satisfy this property.
\begin{conjecture}
	\(\TMXY\)-UI has the independent sums property. 
\end{conjecture}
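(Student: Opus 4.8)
The plan is to prove the two inequalities separately; write $\UI_0$ for $\UI(M_1,M_2 : X_1,X_2\backslash Y_1,Y_2)$ and $\UI_i$ for $\UI(M_i:X_i\backslash Y_i)$. The inequality $\UI_0\ge\UI_1+\UI_2$ is the routine direction: take optimizers $T_1^\star,T_2^\star$ realized by channels $p(t_i\mid m_i)$, and form the joint law in which $T_i^\star-M_i-(X_i,Y_i)$ for each $i$ and $(T_1^\star,M_1,X_1,Y_1)\indep(T_2^\star,M_2,X_2,Y_2)$. Then $T:=(T_1^\star,T_2^\star)$ satisfies $T-(M_1,M_2)-(X_1,X_2,Y_1,Y_2)$ and $T\indep(Y_1,Y_2)$ by the product structure together with $T_i^\star\indep Y_i$, so it is feasible for the joint problem; and since $(T_1^\star,X_1)\indep(T_2^\star,X_2)$, mutual information is additive, $I(T;X_1X_2)=I(T_1^\star;X_1)+I(T_2^\star;X_2)=\UI_1+\UI_2$ (run with $\varepsilon$-optimizers if the maxima are not attained).

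For the reverse inequality, let $T$ be feasible for the joint problem. The first step is to observe that $T$ is already feasible for each single-system problem: since $(M_1,X_1,Y_1)\indep(M_2,X_2,Y_2)$, one checks directly that $(X_1,Y_1)\indep(M_2,T)\mid M_1$ — factoring $p(x_1,y_1,m_2,t\mid m_1)$ using $M_2\indep(X_1,Y_1)\mid M_1$ and $T\indep(X_1,Y_1)\mid(M_1,M_2)$ — which yields the Markov relation $T-M_1-(X_1,Y_1)$, while $T\indep Y_1$ is the marginal of $T\indep(Y_1,Y_2)$. Hence $I(T;X_1)\le\UI_1$ and, symmetrically, $I(T;X_2)\le\UI_2$. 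Because $X_1\indep X_2$, we obtain the identity
\begin{equation*}
    I(T;X_1X_2) = I(T;X_1) + I(T;X_2) + I(X_1;X_2\mid T),
\end{equation*}
so the conjecture is equivalent to the statement that the coupling $I(X_1;X_2\mid T)$ is always at most $\big(\UI_1-I(T;X_1)\big)+\big(\UI_2-I(T;X_2)\big)$ — intuitively, that any entanglement $T$ induces between the two systems is paid for by informativeness lost within each system. It therefore suffices to show that the joint problem admits an optimizer of product form $p(t_1\mid m_1)\,p(t_2\mid m_2)$.

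Establishing this product-form reduction is where I expect the difficulty to lie. The obvious ways to ``split'' a feasible $T$ fail: $(T,X_1)$ does satisfy the Markov relation $(T,X_1)-M_2-(X_2,Y_2)$ needed to be a candidate for the second system, but it need not be independent of $Y_2$; and the conditional law of $T$ given $X_2=x_2$ satisfies the Markov relation for the first system but can break $T\indep Y_1$ (e.g. when $T$ acts like a parity $g(M_1)\oplus h(M_2)$, it can be marginally independent of $(Y_1,Y_2)$ yet reveal $Y_1$ once $X_2$ is observed). The route I would pursue is structural, in the spirit of the tensorization of \GacsKorner{} common information: rewrite $T\indep(Y_1,Y_2)$ as the requirement that the channel $\kappa(t\mid m_1,m_2)$ annihilate the span of $\{p(m_1\mid y_1)p(m_2\mid y_2)-p(m_1)p(m_2)\}$, which factors through $U_1\otimes\langle p_{M_2}\rangle+\langle p_{M_1}\rangle\otimes U_2+U_1\otimes U_2$ with $U_i=\operatorname{span}\{p(m_i\mid y_i)-p(m_i)\}$, and then argue that an objective-maximizing $\kappa$ on this constraint set can be chosen of product form, exploiting that the objective $I(T;X_1X_2)$ and the annihilation constraint interact multiplicatively across the two blocks.

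As a warm-up, and as a check that the conjecture is not false, I would first verify it for the Gaussian variant, where it is immediate from the closed-form solution obtained above: under $(M_1,X_1,Y_1)\indep(M_2,X_2,Y_2)$ the cross-covariances $\Sigma_{YM}$ and $\Sigma_{MX}$ are block diagonal, so $V_{MY_\perp}$ and hence $V_{Y_\perp M}\Sigma_{MX}\Sigma_{XM}V_{MY_\perp}$ can be taken block diagonal, and $-\tfrac12\log\det$ of a block-diagonal matrix is the sum of the blockwise terms.
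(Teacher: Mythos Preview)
The paper does not prove this statement: it is listed as a conjecture, and the discussion section says explicitly that the authors ``do not yet understand whether our definition satisfies'' the independent sums property. There is therefore no proof in the paper to compare against; at most there is some uncompiled draft material after \verb|\end{document}| sketching an intended approach, which the authors themselves flag as incomplete.

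On your proposal itself: the easy direction $\UI_0\ge\UI_1+\UI_2$ is correct, and your verification that any $T$ feasible for the joint problem is already feasible for each marginal problem (via $(X_1,Y_1)\indep(M_2,T)\mid M_1$, hence $T-M_1-(X_1,Y_1)$, together with $T\indep Y_1$) is also correct and is a useful reduction. The Gaussian check via block-diagonality of $\Sigma_{YM}$, $\Sigma_{MX}$ and hence of $V_{MY_\perp}$ is valid and does settle the conjecture for $\UI^{(\calN)}$.

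The genuine gap is the reverse inequality in the general case, and you are candid that you have not closed it. Your reformulation --- that it suffices to exhibit a product-form optimizer, equivalently to show $I(X_1;X_2\mid T)\le\sum_i\big(\UI_i-I(T;X_i)\big)$ --- is the right diagnosis, but the ``structural'' sketch you give is not yet an argument: the assertion that an objective-maximizing channel on the annihilation constraint ``can be chosen of product form'' because objective and constraint ``interact multiplicatively across the two blocks'' is precisely the claim to be proved, and the \GacsKorner{} analogy is suggestive rather than a mechanism. Your own parity-type example shows why naive splittings fail, and nothing in the sketch rules out a non-product $T$ strictly beating every product $T$.

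For comparison, the authors' abandoned draft pursues a different line: relax $T\indep Y$ to a Lagrangian $\max_{T-M-XY} I(T;X)-\beta I(T;Y)$, write down first-order stationarity conditions for $p(t\mid m)$, observe that under the independence assumption these conditions decouple so that a product $T=(T_1,T_2)$ is stationary for the joint $\beta$-problem, and then send $\beta\to\infty$. That route is also unfinished --- stationarity is not optimality, and the $\beta\to\infty$ limit needs justification --- but it is a concretely different attack from your tensorization idea, and either could in principle be completed.
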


\begin{conjecture}
	\(\MYXT\)-UI has the independent sums property.
\end{conjecture}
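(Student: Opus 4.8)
The plan is to prove the two inequalities that make up the claimed equality separately. Before attacking the general case I would verify the conjecture in the Gaussian subvariant as a sanity check, where it falls out of our closed-form expression for $\hat\UI^{(\calN)}$: if $(M_1,X_1,Y_1)\indep(M_2,X_2,Y_2)$ are jointly Gaussian and prewhitened blockwise, then the cross-covariances $\Sigma_{YX}$ and $\Sigma_{XM}$ are block-diagonal, so an orthonormal basis $V_{XY_\perp}$ of $\Ker(\Sigma_{YX})$ can be chosen block-diagonal, the matrix $V_{Y_\perp X}\Sigma_{XM}\Sigma_{MX}V_{XY_\perp}$ is block-diagonal, and $\det(I-\cdot)$ — hence $-\half\log\det(I-\cdot)$ — splits as a sum over the two blocks. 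This special case also exhibits the optimal $T$ in product form, which is a useful hint for the general argument below.

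For the direction $\hat\UI(M_1,M_2:X_1,X_2\backslash Y_1,Y_2)\geq\hat\UI(M_1:X_1\backslash Y_1)+\hat\UI(M_2:X_2\backslash Y_2)$, I would take (near-)optimal $T_1,T_2$ for the two sub-problems and let $T=(T_1,T_2)$ carry the joint law under which $(T_1,X_1,M_1,Y_1)\indep(T_2,X_2,M_2,Y_2)$, each $T_i$ keeping its original conditional law given $X_i$. A short computation with the chain rule for mutual information shows that $M_1M_2Y_1Y_2-X_1X_2-T$ and $T\indep(Y_1,Y_2)$ follow from block-independence together with $M_iY_i-X_i-T_i$ and $T_i\indep Y_i$, while $I(T;M_1M_2)=I(T_1;M_1)+I(T_2;M_2)$ by the product structure; letting $T_1,T_2$ approach optimality gives the inequality. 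This direction is routine.

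The reverse inequality is the substantive part. Given any feasible $T$ for the joint problem, I would split $I(T;M_1M_2)=I(T;M_1)+I(TM_1;M_2)$, using $M_1\indep M_2$. For the first term, replace $X_2$ by an independent copy $X_2'\sim p_{X_2}$ and, given $X_1$ and this fresh $X_2'$, draw $\tilde T_1$ from the original conditional of $T$ given $(X_1,X_2)$; then $\tilde T_1$ is a stochastic function of $X_1$ alone, block-independence gives that $(\tilde T_1,M_1,Y_1)$ has the same joint law as $(T,M_1,Y_1)$, so $M_1Y_1-X_1-\tilde T_1$, $\tilde T_1\indep Y_1$ (inherited from $T\indep Y_1$), and $I(\tilde T_1;M_1)=I(T;M_1)$; hence $I(T;M_1)\leq\hat\UI(M_1:X_1\backslash Y_1)$. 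The difficulty is the second term. The symmetric construction — re-randomizing block $1$ to obtain a stochastic function $\tilde T_2$ of $X_2$ with the same joint law as $(T,M_1)$ relative to $M_2$ and $Y_2$ — does give $M_2Y_2-X_2-\tilde T_2$ and $I(\tilde T_2;M_2)=I(TM_1;M_2)$, but it is feasible only if $(T,M_1)\indep Y_2$, i.e.\ only if $T\indep Y_2\mid M_1$. The joint constraint only yields $T\indep(Y_1,Y_2)$, and conditioning on $M_1$ can re-create dependence between $T$ and $Y_2$ (because $T$ may exploit the within-block coupling of $X_1$ and $Y_1$), so this step genuinely fails; I expect it to be the main obstacle. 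Two escape routes seem promising: (i) show that an optimal $T$ for the joint problem can always be taken of product form $T=(T_1,T_2)$ with $T_i$ a function of $X_i$, which would reduce $\leq$ to the construction above and is exactly what happens in the Gaussian case; or (ii) replace the naive $\tilde T_2$ by a pruned version that discards the component of $(T,M_1)$ responsible for its residual dependence on $Y_2$ while retaining its information about $M_2$, and argue that nothing is lost. Making either route rigorous is where the real work lies.
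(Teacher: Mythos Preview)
The paper does not prove this statement: it is explicitly labelled a \emph{conjecture} and left open. (The source file contains, after \verb|\end{document}|, a sketch via a $\beta$-relaxation $\hat\UI_\beta(M:X\backslash Y)=\max_{MY-X-T}I(T;M)-\beta I(T;Y)$ and first-order optimality conditions, but that sketch is itself flagged as incomplete and was deliberately excluded from the paper.) So there is no ``paper's own proof'' to compare against.

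On your proposal itself: the Gaussian sanity check and the $\geq$ direction are correct and cleanly argued; the product construction $T=(T_1,T_2)$ with block-independent conditionals is exactly the right witness. For the $\leq$ direction your first bound $I(T;M_1)\leq\hat\UI(M_1:X_1\backslash Y_1)$ via the fresh-$X_2'$ re-randomization is valid, and you have correctly isolated the genuine obstruction: controlling $I(T;M_2\mid M_1)$ requires feasibility of a surrogate for the second sub-problem, and the only constraint you have is $T\indep(Y_1,Y_2)$, which does \emph{not} imply $T\indep Y_2\mid M_1$ in general. Neither of your escape routes is yet an argument---route~(i) (an optimal $T$ can always be taken in product form) is essentially equivalent to the conjecture, and route~(ii) (prune the residual $Y_2$-dependence without losing $M_2$-information) would need a quantitative lemma you have not stated. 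In short, your write-up is an honest and accurate status report on an open problem, not a proof; it goes somewhat further than the paper's own abandoned sketch, but the hard direction remains unresolved.
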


\section{Acknowledgements}

P.~Venkatesh was supported by the Shanahan Family Foundation Fellowship at the Interface of Data and Neuroscience at the Allen Institute and the University of Washington, supported in part by the Allen Institute. We wish to thank the Allen Institute founder, Paul G.~Allen, for his vision, encouragement, and support.

\nocite{*}
\bibliographystyle{IEEEtran}
\bibliography{IEEEabrv,ref}

\appendix

\section{Appendices}
\subsection{Lemmas for the Gaussian Results}

\begin{lemma}\label{gaussian-markov}
	If \(X, W, Y\) are jointly multivariate Gaussian random variables,
	\begin{align*}
		X - W - Y \qquad \iff \qquad \Sigma_{XY} = \Sigma_{XW}\Sigma_{WW}^{-1}\Sigma_{WY}.
	\end{align*}
\end{lemma}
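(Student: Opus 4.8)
The plan is to convert the Markov statement $X-W-Y$, which by definition is the conditional independence $X\indep Y\mid W$, into the displayed algebraic identity by exploiting that everything is jointly Gaussian. First I would recall the two standard facts that drive the whole argument: (i) for a jointly Gaussian vector $(X,W,Y)$ with $\Sigma_{WW}$ invertible, the conditional law of the stacked vector $[X^{T}\ Y^{T}]^{T}$ given $W=w$ is again Gaussian, with covariance equal to the Schur complement
\[
	\Sigma_{(XY)\mid W} = \Sigma_{(XY)} - \Sigma_{(XY)W}\,\Sigma_{WW}^{-1}\,\Sigma_{W(XY)},
\]
which, importantly, does not depend on $w$; and (ii) a jointly Gaussian pair is independent if and only if its cross-covariance vanishes.

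Next, applying (ii) in the conditional world: because the conditional law of $(X,Y)$ given $W$ is jointly Gaussian, $X\indep Y\mid W$ holds if and only if the off-diagonal block of $\Sigma_{(XY)\mid W}$ is zero. Reading that block off from the Schur complement above, it is exactly $\Sigma_{XY} - \Sigma_{XW}\Sigma_{WW}^{-1}\Sigma_{WY}$, so the Markov condition is equivalent to
\[
	\Sigma_{XY} = \Sigma_{XW}\Sigma_{WW}^{-1}\Sigma_{WY},
\]
which is the claim, and both implications come out at once from this single equivalence.

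The step I expect to need the most care is the ``uncorrelated implies independent'' move in (ii): it is valid only because the conditional distribution is genuinely jointly Gaussian (not merely Gaussian in each argument) and because its covariance is constant in $w$, so that a block-diagonal conditional covariance really does make the conditional density factor for (almost) every $w$. A secondary point is the invertibility of $\Sigma_{WW}$, which the statement presupposes; if one wants to cover the degenerate case, I would replace $\Sigma_{WW}^{-1}$ by the Moore--Penrose pseudoinverse (or pass to a maximal linearly independent sub-collection of the coordinates of $W$), after which the same Schur-complement identity and the same conclusion go through verbatim.
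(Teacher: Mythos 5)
Your proposal is correct and follows essentially the same route as the paper: the paper likewise identifies the conditional cross-covariance $\Sigma_{XY|W}$ with the Schur complement $\Sigma_{XY}-\Sigma_{XW}\Sigma_{WW}^{-1}\Sigma_{WY}$ and equates its vanishing with conditional independence for jointly Gaussian variables. You merely spell out more explicitly the ``zero cross-covariance iff independent'' step and the degenerate-$\Sigma_{WW}$ caveat, which the paper leaves implicit.
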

\begin{proof}
    We simply expand covariance terms, using the fact that the covariance matrices of conditional distributions, for jointly Gaussian variables are given by Schur complements~\cite{zhang2006schur}. 
	\begin{align*}
	    0 &= \text{Cov}(X; Y|W) = \Sigma_{XY|W}\\
	    &= \Sigma_{(XW)(YW)} / \Sigma_{WW} \\
	    &= \begin{bmatrix}
	        \Sigma_{XY} &\Sigma_{XW}\\
	        \Sigma_{WY} &\Sigma_{WW}
	    \end{bmatrix}/\Sigma_{WW}\\
	    &= \Sigma_{XY} - \Sigma_{XW}\Sigma_{WW}^{-1}\Sigma_{WY}.
	\end{align*}
	Therefore, $X\indep Y | W$ iff $\Sigma_{XY} = \Sigma_{XW}\Sigma_{WW}^{-1}\Sigma_{WY}$. 
\end{proof}

\subsection{Lemmas for Binary Examples}
\begin{lemma}\label{lem:binary}
	If \(X-Y-Z\) are finite random variables jointly distributed as \(p(x, y, z)\), and further \(Y, Z\) are binary, then if \(X \indep Z\) and \(Y \not\indep Z\), then \(X \indep Y\).
\end{lemma}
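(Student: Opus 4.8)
The plan is to reduce the statement to a short algebraic identity obtained by combining the Markov property $X\indep Z\mid Y$ with the marginal independence $X\indep Z$, and then reading off $X\indep Y$. First I would dispose of the degenerate cases: if $Y$ is almost surely constant then $Y\indep Z$ automatically, contradicting the hypothesis $Y\not\indep Z$, so I may assume $\pi\coloneqq p(Y=1)\in(0,1)$. Writing $a(x)=p(X=x\mid Y=0)$, $b(x)=p(X=x\mid Y=1)$, and $q_i=p(Z=1\mid Y=i)$, the Markov relation gives $p(x,Z{=}1)=(1-\pi)a(x)q_0+\pi b(x)q_1$, while $p(x)=(1-\pi)a(x)+\pi b(x)$ and $p(Z{=}1)=(1-\pi)q_0+\pi q_1$.

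Next I would impose $X\indep Z$, which — since $Z$ is binary — is equivalent to $p(x,Z{=}1)=p(x)\,p(Z{=}1)$ for every $x$. Expanding the product and cancelling terms, the difference of the two sides factors as
\[
  \pi(1-\pi)\,(q_0-q_1)\,\bigl(a(x)-b(x)\bigr) \;=\; 0 \qquad\text{for all }x.
\]
Then I would invoke the remaining hypothesis: for binary $Y$ with $\pi\in(0,1)$, the condition $Y\not\indep Z$ is exactly $q_0\neq q_1$. Since also $\pi(1-\pi)\neq 0$, the displayed identity forces $a(x)=b(x)$ for every $x$, i.e.\ $p(X{=}x\mid Y{=}0)=p(X{=}x\mid Y{=}1)=p(x)$, which is precisely $X\indep Y$.

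The computation itself is routine; the only points requiring care are the bookkeeping of the degenerate cases and the two translations that make the argument work for binary variables — namely that $X\indep Z$ collapses to the single equation for the event $\{Z{=}1\}$, and that $Y\not\indep Z$ collapses to $q_0\neq q_1$ once $\pi\in(0,1)$. I expect the factorization step to be the ``main obstacle'' only in the sense that it must be carried out cleanly; no deeper idea is needed. (I note in passing that binary-ness of $Z$ is not actually used: the same factorization with $q_i$ replaced by $p(Z{=}z\mid Y{=}i)$ yields $a=b$ as soon as $Y\not\indep Z$ furnishes some $z$ with $p(z\mid Y{=}0)\neq p(z\mid Y{=}1)$.)
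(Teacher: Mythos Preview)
Your proof is correct and follows essentially the same route as the paper: use the Markov relation to express the relevant probabilities through $p(x\mid y)$, impose $X\indep Z$, and then invoke $Y\not\indep Z$ to force $p(x\mid Y{=}0)=p(x\mid Y{=}1)$. The paper phrases the last step as a rank/nullity argument on the $l\times 2$ matrix $[p(x_i\mid y_j)]$ rather than your explicit factorization $\pi(1-\pi)(q_0-q_1)(a(x)-b(x))=0$, but the content is the same; your remark that binary $Z$ is inessential is a bonus the paper does not note.
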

\begin{proof}
	Assume w.l.o.g. that $X, Y, Z$ are random variables over the alphabets $\{x_i\}_{i\in[l]}, \{y_i\}_{i\in[2]}, \{z_i\}_{i\in[2]}$ respectively. For every $i \in [l]$, we have 
	\begin{align}
	    p(x_i | z_1) &= \sum_{j \in[2]} p(x_i | y_j) p(y_j | z_1)\\
	    p(x_i | z_2) &= \sum_{j \in[2]} p(x_i | y_j) p(y_j | z_2).
	\end{align}
	Subtracting the second equation from the first, we obtain
	\begin{align*}
	    0 &= p(x_i | y_1) (p(y_1 | z_1) - p(y_2 | z_2))\\
	    &\quad + p(x_i | y_2) (p(y_2|z_1) - p(y_2 | z_2)). 
	\end{align*}
	Or in matrix form, 
	\begin{align*}
	    0 &= \begin{bmatrix}
	        p(x_1 | y_1) &p(x_1 | y_2)\\
	        \vdots &\vdots \\
	        p(x_l | y_1) & p(x_l | y_2) 
	    \end{bmatrix}
	    \begin{bmatrix}
	        p(y_1 | z_1) - p(y_2 | z_2)\\
	        p(y_2|z_1) - p(y_2 | z_2)
	    \end{bmatrix}.
	\end{align*}
	Now, since $Y\not \indep Z$, we know the vector $[p(y_1 | z_1) - p(y_2 | z_2), p(y_2|z_1) - p(y_2 | z_2)]^T$ is non-zero. Thus, $p[x|y]$ has nullity $\geq 1$, while being an $l\times 2$ matrix. Thus it is rank 1, and for each $i\in[l]$, we must have $p(x_i|y_1) = p(x_i|y_2)$. 
\end{proof}
\end{document}